\documentclass[11pt]{article}


\usepackage{amssymb}
\usepackage{amsmath}
\usepackage{amsthm}
\usepackage{textcomp}
\usepackage{array}
\usepackage[latin5]{inputenc}
\usepackage{lineno}
\usepackage{color}
\usepackage[margin=1.3in]{geometry}
\usepackage{comment}

\usepackage{graphicx}


\setlength{\tabcolsep}{1px}


\newcommand{\ket}[1]{|#1\rangle}
\newcommand{\braket}[2]{\langle #1|#2\rangle}

\newtheorem{theorem}{Theorem}
\newtheorem{corollary}{Corollary}

\theoremstyle{definition}

\newcommand{\LL}{\mathtt{L}}


\title{Magic coins are useful for small-space quantum machines}
\author{
A. C. Cem Say\\
\small
Bo\u{g}azi\c{c}i University, Department of Computer Engineering, Bebek 34342 \.{I}stanbul, Turkey
\\
Abuzer Yakary{\i}lmaz\thanks{Yakary{\i}lmaz was partially supported by CAPES with grant 88881.030338/2013-01, ERC Advanced Grant MQC, and FP7 FET project QALGO\@. Moreover, the part of the research work was done while Yakary{\i}lmaz was visiting Bo\u{g}azi\c{c}i University in 2014\@.}
\\
\small 
National Laboratory for Scientific Computing, Petr\'{o}polis, RJ, 25651-075, Brazil
\\
\small \texttt{say@boun.edu.tr}~~~~\small \texttt{abuzer@lncc.br}
}

\date{\small Keywords: quantum computation, constant space, unrestricted amplitudes}


\begin{document}

\maketitle

\begin{abstract}
Although polynomial-time probabilistic Turing machines can utilize uncomputable transition probabilities to recognize uncountably many languages with bounded error when allowed to use logarithmic space, it is known that such ``magic coins'' give no additional computational power to constant-space versions of those machines. We show that adding a few quantum bits to the model changes the picture dramatically. For every language $L$, there exists such a two-way quantum finite automaton that recognizes a language of the same Turing degree as $L$ with bounded error in polynomial time. When used as verifiers in public-coin interactive proof systems, such automata can verify membership in all languages with bounded error, outperforming their classical counterparts, which are known to fail for the palindromes language.
\end{abstract}

\section{Introduction}
Determining whether various resource-bounded quantum models are equivalent or superior to their classical probabilistic counterparts in power gives valuable insight about the nature of quantum computation, and about the important cases (notably, polynomial-time models) where this problem is still open. Sometimes, arguably unrealistic setups and resources (e.g. \cite{Aar05,AW09}) are also used in these comparisons, keeping in mind how the manifestly unrealistic concept of nondeterminism forms a priceless tool of complexity theory. In this paper, we will be considering variants  of bounded-error constant-space Turing machines allowed to toss coins with  unrestricted (i.e. possibly uncomputable)  real numbers as the probability of coming up heads. In that regime, it is known  that probabilistic  polynomial-time language recognizers have no additional computational power over their standard, fair-coin versions \cite{DS90}.  Languages that cannot be handled by probabilistic public-coin verifiers, even with such magic coins, have also been demonstrated \cite{DS92}. We show that quantum computers outperform their classical counterparts in these contexts; by demonstrating that every language is Turing-equivalent to some language recognized by a polynomial-time two-way quantum finite automaton, and that every language has a public-coin proof system with such an automaton as the verifier. The amount of ``quantumness'' used in our constructions is very small, as the machines we utilize can be viewed as  deterministic two-way finite automata augmented by just one or   two qubits.

\section{Preliminaries}
It is well known that polynomial-time  probabilistic Turing machines (PTM's) can utilize uncomputable transition probabilities to recognize uncountably many languages with bounded error: The $k$'th bit in the decimal expansion of the probability that a given coin will land heads can be estimated by a procedure that involves tossing that coin for a number of times that is exponential in $k$. Given any language $L$ on the alphabet $\{a\}$, and a coin which lands heads with probability $0.x$, where $x$ is an infinite sequence of digits whose $k$'th member encodes whether the $k$'th unary string is in $L$, the language $\{a^{4^k} \vert a^k \in L\}$ is therefore in $\mathsf{BPP}_\mathbb{R}$, the version of $\mathsf{BPP}$ with arbitrary real transition probabilities.\footnote{We thank Peter Shor for helping us with these facts.} The machine in this construction uses logarithmic space, and it was proven by Dwork and Stockmeyer \cite{DS90} that no polynomial-time PTM using $o(\log \log n)$ space can recognize a nonregular language with bounded error, even with unrestricted transition probabilities.
 
Adleman et al. \cite{ADH97} used a construction similar to the one described above to demonstrate that $\mathsf{BQP}_\mathbb{R}$, the class of languages recognized with bounded error in polynomial time by quantum Turing machines (QTM's) employing arbitrary real amplitudes, contains languages of every Turing degree. The QTM formulated by Adleman et al. also uses logarithmic space. Our construction in Section \ref{sec:rec} shows that the coins described in  \cite{ADH97} can also be exploited by polynomial-time constant-space PTM's augmented by a single qubit.

The constant-space quantum model that we will use is the two-way finite automaton with quantum and classical states (2qcfa), introduced by Ambainis and Watrous \cite{AW02}, in which the quantum and classical memories are nicely separated, allowing a precise quantification of the amount of ``quantumness'' required for the task at hand.\footnote{Our exposition of 2qcfa's will use superoperators, generalizing and simplifying the setup of  \cite{AW02}; see \cite{YS11A}. It is not hard to modify our algorithms for implementation in the format of \cite{AW02}.} These machines can be viewed as two-way deterministic automata augmented with a  quantum register of constant size (typically, just one or two qubits). The input string is assumed to be written between two end-marker symbols on a tape. A 2qcfa starts with both its quantum and classical parts set to be in their respective initial states, and operates separately on these parts at each step:  
\begin{itemize}
	\item First, a superoperator (Figure \ref{fig:superoperators}), determined by the current classical state and the symbol being scanned on the input tape, is applied to the quantum register, yielding an outcome. 
	\item Then, the next classical state and tape head movement direction is determined by the  current classical state, the symbol being scanned on the input tape,  and the observed outcome. 
\end{itemize}

Execution halts when the machine enters an accepting or rejecting classical state.

\begin{figure}[!ht]
	\centering
	\footnotesize
	\fbox{
	\begin{minipage}{0.96\textwidth}
		For a 2qcfa with $j$ quantum states, each superoperator $ \mathcal{E} $ is composed of a finite number of $j\times j$ matrices called \textit{operation elements},
		$ \mathcal{E} = \{ E_{1}, \ldots, E_{k} \} $, satisfying
		\begin{equation}
		\label{eq:completeness}
			\sum_{i=1}^{k} E_{i}^{\dagger} E_{i} = I,
		\end{equation}
		where $ k \in \mathbb{Z}^{+} $, and the indices label the possible outcomes.
		When a superoperator $\mathcal{E}$ is applied to 
		a quantum register in superposition $\ket{\psi}$, 
		one observes the outcome $i$ with probability 
		$ p_{i} = \braket{\widetilde{\psi_{i}}}{\widetilde{\psi_{i}}} $,
		where $\ket{\widetilde{\psi_{i}}}$ 
		is calculated as $ \ket{\widetilde{\psi}_{i}} = E_{i} \ket{\psi} $, and $1 \leq i \leq k$.
		If the outcome $i$ is observed ($p_{i} > 0 $), the new superposition 
		is obtained by normalizing $ \ket{\widetilde{\psi}_{i}} $, 
		yielding $ \ket{\psi_{i}} = \frac{\ket{\widetilde{\psi_{i}}}}{\sqrt{p_{i}}} $.
		Unitary operations such as rotations, as well as measurements and probabilistic branchings can be realized within this framework. 
	\end{minipage}
	}
	\caption{Superoperators (adapted from \cite{YSD14})}
	\label{fig:superoperators}
\end{figure}

Ambainis and Watrous \cite{AW02} demonstrated a 2qcfa with just two quantum states (i.e. a single qubit) and computable transition amplitudes that recognizes the nonregular language  $ \mathtt{EQ} = \{ a^nb^n \mid n \geq 0 \} $, in polynomial expected time, thereby establishing the superiority of such machines over their classical counterparts. In the next section, we will use a modified version of the Ambainis-Watrous algorithm as part of our construction of 2qcfa's of the same Turing degree as any given language.

We will compare the capabilities of probabilistic and quantum finite automata as verifiers in public-coin proof systems in Section \ref{sec:ver}. It is known that $\mathsf{AM_\mathbb{R}(2pfa)}$, the class of languages which have such bounded-error proof systems with two-way probabilistic finite automata utilizing arbitrary real transition probabilities as verifiers, does not include the binary palindromes language \cite{DS92}. 

In  public-coin systems with 2qcfa verifiers \cite{Yak13C}, the outcomes of the superoperators of the verifier are instantly available to the computationally powerful prover, who is trying to convince the verifier that the input should be accepted. The quantum and classical transition functions of the 2qcfa's in these systems are modified to take into account of the communication symbol sent by the prover in each step; see \cite{YSD14} for the details.

Although this paper is the first to study 2qcfa verifiers with unrestricted amplitudes, the class $\mathsf{AM_\mathbb{Q}(2qcfa)}$, i.e. the version with rational amplitudes, has already been shown to include $ \mathsf{PSPACE} $\footnote{The proof will appear in the new version of \cite{Yak13C}.} and some $\mathsf{NEXP}$-complete languages \cite{Yak13C}. The constructions in this papers also involve very small amounts of quantum memory. One technique that we will borrow from \cite{Yak13C,YSD14} is that a 2qcfa verifier can encode the integer represented by a binary string that is read from the prover into the amplitude of a quantum state, albeit with low probability in each such attempt.

Another demonstration of the superiority of quantum finite automata over their classical counterparts in this unrealistic setup with no noise and decoherence is the result of Aaronson and Drucker \cite{AD11}, who showed a succinctness advantage of quantum real-time machines in the task of distinguishing two classical coins with different bias. 

In the following, $ \Sigma^*(i) $ denotes the $ i $th element in the lexicographic ordering of all possible strings on alphabet $ \Sigma $, where $ i> 0 $. $ \Sigma^*(1) $ is the empty string, denoted $\varepsilon$. We will use the fact that, for any $i$, the string $ 1\Sigma^*(i) $ is the binary encoding of $i$. 
For a string $w$, $|w|_\sigma$ denotes the number of occurrences of symbol $\sigma$ in $w$.

\section{Language recognition in polynomial time}
 \label{sec:rec}

In this section, we present a constant-space version of a theorem proven by Adleman et al. \cite{ADH97} for logarithmic-space quantum Turing machines. The main concern in the adaptation  is to get a machine to recognize, and count up to, arbitrary powers of 8, without using any secondary memory. 

\newcommand{\powereq}{\mathtt{POWER\mbox{-}EQ}}
\newcommand{\powereqL}{\mathtt{POWER\mbox{-}EQ(L)}}
\begin{theorem}\label{theorem:powereq}
For every language $L$, there exists a language  $L'$  on the alphabet $\{a,b\}$ such that $L'$ is Turing equivalent to $L$, and there exists a 2qcfa which recognizes $L'$ with bounded error in polynomial expected time.
\end{theorem}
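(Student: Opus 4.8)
The plan is to mimic the Adleman--Kompella--DeMarrais--Huang construction, but with all the ``bookkeeping'' that they delegate to a logarithmic-space work tape instead carried out on the input tape itself, using only the two-way head and a constant number of quantum states. Fix a language $L$ over $\{a\}$ (a unary encoding of an arbitrary $L$ costs nothing up to Turing equivalence), and let $x = x_1 x_2 x_3 \cdots$ be the infinite binary sequence with $x_k = 1$ iff $a^k \in L$. I would take $L'$ over $\{a,b\}$ to consist of the strings of a fixed simple form whose ``size parameter'' is a power of $8$ --- concretely, strings of the shape $a^{8^k} b^{\,m}$ with $m$ chosen so that the block structure encodes the index $k$ --- so that membership in $L'$ is governed by the single bit $x_k$. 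Turing equivalence of $L'$ and $L$ is then immediate in both directions: a Turing machine with oracle $L$ can check the syntactic ``power of $8$'' condition and then query $a^k \in L$, and conversely a machine with oracle $L'$ recovers $x_k$, hence $a^k \in L$, by a single query on the appropriately padded string.

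The automaton has three jobs. \emph{First}, verify the syntactic shape: that the input is $a^*b^*$ and that the length of the $a$-block is a power of $8$. Checking ``$a^*b^*$'' is trivial for a $\dfa$. Checking that $|w|_a = 8^k$ for some $k$ is done by a quantum subroutine in the spirit of the Ambainis--Watrous algorithm for $\mathtt{EQ}$: repeatedly sweep the $a$-block while applying a fixed rotation to a single qubit so that, after one pass, the amplitude has been rotated by an angle proportional to $|w|_a$; a second, ``geometrically scaled'' pass (implemented by only moving the head every $8$th step, iterated $\log_8$-many conceptual levels, or --- more simply --- by the standard trick of comparing $|w|_a$ against $8\cdot(\text{a guessed smaller power})$ recursively) lets the machine accept with non-negligible probability exactly when the count is a genuine power of $8$, and reject otherwise. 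As in \cite{AW02}, repeating this test and amplifying drives the error below any constant while keeping the expected running time polynomial. \emph{Second}, extract the index $k = \log_8 |w|_a$; because the qubit ``knows'' $|w|_a$ only through a rotation angle, the cleaner route is to have the machine walk the $b$-block (whose length we chose precisely to spell out $k$ in, say, unary or binary via a marker pattern) so that $k$ is read off directly by the finite control --- but $k$ is unbounded, so it cannot be stored; instead the machine keeps the head at the ``$k$-position'' and uses that position as an implicit register. \emph{Third}, and this is where the magic coin enters: using the technique borrowed from \cite{Yak13C,YSD14}, the machine performs $k$ coin tosses with the ``magic'' coin whose heads-probability is $0.x_1 x_2 x_3\cdots$ (equivalently, applies a fixed amplitude-$\sqrt{0.x}$ operator and post-selects/amplifies) and, by the standard exponential-trials estimation of the $k$th bit of that probability, determines $x_k$ with bounded error; it accepts iff $x_k = 1$. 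The number of tosses needed to resolve the $k$th bit is exponential in $k$ but only polynomial in the input length $n = 8^k + \Theta(k)$, so the whole procedure runs in polynomial expected time.

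The main obstacle, and the place where constant space really bites, is job two combined with job three: a $\dfa$-plus-$O(1)$-qubits machine has no way to hold the number $k$, yet it must (i) confirm $|w|_a$ is the $k$th power of $8$ and (ii) toss the magic coin exactly a number of times that grows with $k$. I expect to resolve this by never materializing $k$ at all: the ``power of $8$'' test and the bit-estimation loop are both driven by head sweeps over the input, so the input length itself serves as the counter. For the bit-estimation, instead of ``toss $4^k$ times'' I would have the machine, for each of a polynomially growing number of rounds, make one pass that uses the head position to deterministically decide when to stop a subsidiary counting sweep, so that the number of magic-coin tosses per round is tied to $8^k = \Theta(n)$ rather than to the unstorable $k$; one then checks that $\mathrm{poly}(n)$ rounds suffice to estimate $x_k$ to constant confidence, since the $k$th bit of a real number is resolvable from $O(4^k) = O(n^{2})$-ish many samples and $k = \Theta(\log n)$. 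Assembling these pieces --- syntactic check, then amplified magic-coin bit extraction, with a clean-up pass to restore the head --- and verifying the error bound via a union bound over the $O(1)$ many bounded-error subroutines, completes the construction; the polynomial expected time bound follows because every subroutine is either deterministic, a polynomial-time Ambainis--Watrous-style test, or a polynomial-sample estimation.
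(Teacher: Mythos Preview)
Your proposal has two genuine gaps, and both are precisely the points where the paper's construction does real work.

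\textbf{Checking that the $a$-block has length a power of $8$.} You wave at ``a second, geometrically scaled pass \ldots iterated $\log_8$-many conceptual levels'' or ``comparing $|w|_a$ against $8\cdot(\text{a guessed smaller power})$ recursively,'' but a 2qcfa scanning a plain block $a^N$ has no delimiters to mark the intermediate powers $8,8^2,\ldots,8^{k-1}$, and no way to store or revisit those positions with constant classical memory. The Ambainis--Watrous rotation trick compares \emph{two} given lengths; it does not let you verify that a single length is a power of anything. Indeed, the paper remarks in its appendix that it is still open whether any 2qcfa recognizes a nonregular unary language with bounded error, so your ``first job'' is essentially asking for something not known to be possible. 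The paper's fix is to change the \emph{input format}: the language $\powereq$ has the intermediate powers spelled out as $b$-delimited blocks $a^{7},a^{7\cdot 8},a^{7\cdot 8^2},\ldots$, so the 2qcfa only ever has to compare two \emph{adjacent} blocks (rotate forward on one block, backward every 8th step on the next), which is exactly the Ambainis--Watrous comparison.

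\textbf{Extracting the $k$th bit with a classical-style estimator.} You propose tossing a coin of bias $0.x_1x_2\cdots$ and estimating the $k$th bit from about $4^k$ samples. Even granting that $4^k=n^{2/3}$ is polynomial, a constant-space machine cannot count the number of heads, compare it to a $k$-dependent threshold, or otherwise carry out the thresholding step of a Chernoff-style estimator; your ``head position as implicit register'' does not help, since the threshold is a function of $k$, not of a fixed input position. The paper avoids sampling altogether: it encodes $L$ in a rotation \emph{angle} $\theta_L=2\pi\sum_i F_L(i)/8^{i+1}$, so that rotating a single qubit exactly $|w|_a=8^{j}$ times by $\theta_L$ brings the qubit to within a small $\delta$ of $\pm\pi/4$ according to whether the $j$th string is in $L$. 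One final $\pi/4$ rotation and one measurement suffice; no counting, no thresholding, and the whole thing runs in linear time once membership in $\powereq$ has been (probabilistically) established.

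In short, the paper's two key ideas are exactly the two places your sketch is vague: put the block structure into the input so the power-of-$8$ check becomes a sequence of pairwise equalities, and put $L$ into a rotation angle so bit extraction is a single measurement rather than a statistical estimate.
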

\begin{proof}
We start by considering the language 
\[
	\powereq=\{ a b a^7 b a^{7 \cdot 8} b a^{7 \cdot 8^2} b a^{7\cdot 8^3} b \cdots b a^{7\cdot 8^n} \mid n \geq 0 \},
\]
where the number of $a$'s in  any string  is seen to be $ 8^{n+1} $ for some $n \geq 0  $. The 2qcfa described below (Figure \ref{fig:powereqalg}) uses a single qubit (with quantum state set $\{q_0,q_1\}$) to recognize $ \powereq $ in polynomial time.

\newcommand{\IND}{\hspace{10pt}}

\begin{figure}[!ht]
	\centering
	\footnotesize
	\fbox{
	\begin{minipage}{0.96\textwidth}
If the input is $a b a^7$, ACCEPT.

Check whether the input is of the form $a b a^7 b a^{7 \cdot t_1} b a^{7 \cdot t_2}  \cdots b a^{7 \cdot t_n} $ for some $n> 0 $, where all the $t_i$ are positive multiples of 8. If not, REJECT. 

Move the head to the leftmost $b$ in the input.

LOOP:

\IND 	Move the head right to the next $a$.
 	
\IND	Set the qubit to $ \ket{q_0} $. 

\IND	While the currently scanned symbol is $a$:

\IND \IND	Rotate the qubit with angle $\sqrt{2} \pi $.

\IND \IND	Move the head to the right.

\IND	(The currently scanned symbol is a $b$.) Move the head to the $a$ on the right. 

\IND	While the currently scanned symbol is $a$:

\IND \IND	Rotate the qubit with angle $-\sqrt{2} \pi $.

\IND \IND	Move the head 8 squares to the right.
 			
\IND Measure the qubit. If the result is $ q_1 $, REJECT. (I)
 		
\IND 	If the currently scanned symbol is  a $b$, move the head to the nearest $b$ on the left, and goto LOOP.
 	
\IND 	(The currently scanned symbol is the right end-marker.) 	
 	
\IND 	Repeat twice: 
 	
\IND \IND	Move the tape head to the first input symbol.

\IND \IND	While the currently scanned symbol is not an end-marker, do the following: 

\IND \IND \IND	Simulate a classical coin flip. If the result is heads, move right. Otherwise, move left. 

\IND 	If the process ended at the right end-marker both times, and two more  coin flips both turn out heads, goto EXIT.	

\IND Move the head to the leftmost $b$ in the input, and goto LOOP.

EXIT:

\IND 	ACCEPT the input.
	\end{minipage}
	}
	\caption{A 2qcfa for $\powereq$}
	\label{fig:powereqalg}
\end{figure}

After an easy deterministic check, that 2qcfa enters an infinite loop where each iteration compares $ t_i$ with $\frac{t_{i+1}}{8} $ for all $ i \in {1,\ldots,n-1} $. This is achieved by first rotating the qubit counterclockwise  $ t_i $ times,  then rotating it clockwise $ \frac{t_{i+1}}{8} $ times, and finally checking whether it has returned to its original orientation $ \ket{q_0} $. Since the rotation angle is an irrational multiple of $\pi$, the probability $r_i$ that the machine will reject at the line marked (I) in Figure \ref{fig:powereqalg} is zero if and only if  $ t_i=\frac{t_{i+1}}{8} $ for the corresponding $i$. If $ t_i \neq \frac{t_{i+1}}{8} $, then $r_i$  will be at least \cite{AW02}
\[
	\frac{1}{2 (t_i - \frac{t_{i+1}}{8} )^2 }>\frac{1}{2 (t_i+ \frac{t_{i+1}}{8} )^2 }.
\]
We therefore conclude that any input string $ w \notin \powereq$ which has survived the deterministic check in the beginning will be rejected with a probability greater than $ \frac{1}{2 |w|^2} $ in each iteration of the infinite loop.

If the input $w$ has not been rejected after all the $n-1$ comparisons described above, the 2qcfa makes two consecutive random walks starting on the first input symbol, and ending at an end-marker. The probability that both these walks will end at the right end-marker, and the subsequent coin tosses both yield heads, leading to acceptance in this iteration of the infinite loop, is $ \frac{1}{4(|w|+1)^2} $, and the expected runtime for this stage is $ O(|w|^2) $ \cite{AW02}. This means that the machine will halt within $O(|w|^2)$ expected iterations of the loop, leading to an overall expected runtime of $ O(|w|^4) $. 

To conclude, the 2qcfa of Figure \ref{fig:powereqalg} will accept any string $w\in\powereq$ with probability 1. On the other hand, any $w\notin\powereq$ that makes it into the loop has a rejection probability that is more than twice as large as its  acceptance probability in each iteration, and therefore will be rejected with probability greater than $ \frac{2}{3} $. 

We are now ready to adapt the technique of Adleman et al. \cite{ADH97} to 2qcfa's, since  the algorithm in Figure \ref{fig:powereqalg} provides us with a way of ensuring that the number of $a$'s on the tape is a power of 8  with sufficiently high reliability.

For any  $ \mathtt{L} \subseteq \Sigma^* $, define the  language 
\[
	\powereqL = \{ w \in \{a,b\}^* \mid w \in \powereq \mbox{ and } \Sigma^*(\log_8 (|w|_a)) \in \mathtt{L}  \}.
\]
Note that $\mathtt{L}$ and $\powereqL$ are Turing reducible to each other.

We will show that, for any $\mathtt{L}$, $\powereqL $ can be recognized by the 2qcfa described in Figure \ref{fig:powereqLalg} in polynomial expected time:

\begin{figure}[!ht]
	\centering
	\footnotesize
	\fbox{
	\begin{minipage}{0.80\textwidth}
(Assume that the input string is in $\powereq$.)

Move the head to the left end-marker.

Set the qubit to $ \ket{q_0} $. 

While the currently scanned symbol is not the right end-marker:

\IND	Rotate the qubit with angle $\theta_\mathtt{L} $ only if the currently scanned symbol is an $a$.~

\IND 	Move the head to the next square on the right.

Rotate the qubit with angle $\frac{\pi}{4} $.

Measure the qubit. If the result is $ q_1 $, ACCEPT. Otherwise, REJECT.

	\end{minipage}
	}
	\caption{A 2qcfa for $\powereqL$}
	\label{fig:powereqLalg}
\end{figure}

Any string which is a member of $\powereq$ contains $8^{j}$ $a$'s for some $j$, and that string is in $\powereqL$ if $\Sigma^*(j)\in \mathtt{L}$. This information about $\mathtt{L}$ is encoded into the transition amplitudes via the angle $\theta_{\mathtt{L}}$ as follows.

Define 
\[
	\theta_{\mathtt{L}} = 2 \pi \sum_{i=1}^{\infty} \left( \frac{F_{\mathtt{L}}(i)}{8^{i+1}}  \right) 
\]
where the function $ F_{\mathtt{L}}:\mathbb{N}\rightarrow \{-1,1\} $ is
\[
	F_{\mathtt{L}}(n) = \left\lbrace \begin{array}{rl}
		1, & \mbox{ if } \Sigma^*(n) \in \mathtt{L},
		\\
		-1, & \mbox{ if  } \Sigma^*(n) \notin \mathtt{L}.
	\end{array}
	\right. 
\]

Rotating the qubit $8^{j}$ times with $\theta_{\mathtt{L}}$ radians
 leaves it at an angle
\[
	8^{j} \cdot  2 \pi \sum_{i=1}^{\infty} \left( \frac{F_{\mathtt{L}}(i)}{8^{i+1}}  \right) =
	 \pi \frac{F_{\mathtt{L}}(j)}{4} +  \left( 2\pi \frac{F_{\mathtt{L}}(j+1)}{8^2} + 2\pi \frac{F_{\mathtt{L}}(j+2)}{8^3}  +\cdots \right)~~(\mathrm{mod}~2\pi)
\]
radians from the original orientation $\ket{q_0}$. After the last rotation by $\frac{\pi}{4}$ radians, the qubit's final angle from $\ket{q_0}$ is  $ \frac{\pi}{2} + \delta $ (i.e. near $\ket{q_1}$) if $ F_{\mathtt{L}}(j) = 1 $, and $ \delta $ (i.e. near $\ket{q_0}$) if $F_{\mathtt{L}}(j)=-1$ for a $ \delta $ guaranteed to be sufficiently small to 
obtain an error bound of 0.02, as in \cite{ADH97}. The runtime is linear in the input size.

All that remains is to combine the algorithms of Figures \ref{fig:powereqalg} and \ref{fig:powereqLalg}. One starts by using the technique depicted in Figure \ref{fig:powereqalg} to reject strings that are not in $\powereq$ with probability at least 0.666. Members of $\powereq$ are treated correctly by the 2qcfa of Figure \ref{fig:powereqLalg} with probability 0.98. We conclude that the combined 2qcfa recognizes $\powereqL$ with probability at least 0.65, and this probability can be reduced further as desired using standard repetition techniques.

The expected running time of the combined algorithm is mainly 
due to the method of Figure \ref{fig:powereqalg}, and is again polynomially bounded. 
\end{proof}

Some automata-theoretic corollaries of this result are presented in the Appendix. 


\section{Public-coin proof systems}
\label{sec:ver}

In our demonstration of the power provided to verifiers by quantum coins with uncomputable bias, we will use a slightly different method than the one in Section \ref{sec:rec} to represent an entire language by a real number.
 


For any given language $\mathtt{L} $ on alphabet $\Sigma$, let the characteristic function $ G_\mathtt{L} $ map any member of $\mathtt{L}$ to 1, and any non-member to  0. The number 
\[
	\gamma_\mathtt{L} = \sum_{i=1}^\infty \frac{ G_\mathtt{L}( \Sigma^*(i) )}{4^i} = \frac{ G_\mathtt{L}( \Sigma^*(1) ) }{4} +\frac{ G_\mathtt{L}( \Sigma^*(2) ) }{4^2} + \frac{ G_\mathtt{L}( \Sigma^*(3) ) }{4^3} + \cdots
\]
will be used to encode information about $\mathtt{L}$ into the bias of a quantum coin, as will be explained shortly.
Note that $ \gamma_\mathtt{L}$ equals 0 for $\mathtt{L}= \{ \}$, and $ \frac{1}{3} $ for  $\mathtt{L}= \Sigma^* $. For any other language $ \mathtt{L} $, $ \gamma_\mathtt{L} $ is a real number between 0 and $ \frac{1}{3} $.

We start with the case of tally languages, which allows a simpler illustration of the main technique. 

\begin{theorem}\label{theorem:unary}
	For any language $ \mathtt{L} $ on the unary alphabet $U=\{a\} $, there exists a bounded-error public-coin interactive proof system where the messages are classical, the verifier is a 2qcfa with a single quantum bit, and the expected runtime for inputs of length $n$ is $2^{O(n)}$.
\end{theorem}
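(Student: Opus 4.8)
The plan is to encode all of $\mathtt{L}$ into a single transition amplitude of the verifier and to let the prover supply the one large number that a constant-space verifier cannot generate by itself. Because the input is $a^n$ and $U^*(n+1)=a^n$, membership $a^n\in\mathtt{L}$ is exactly the statement that the coefficient of $4^{-(n+1)}$ in $\gamma_\mathtt{L}$ equals $1$; equivalently, the fractional part of $4^n\gamma_\mathtt{L}$, which is $\sum_{j\ge 1}G_\mathtt{L}(U^*(n+j))\,4^{-j}$, lies in $[\tfrac14,\tfrac13]$ if $a^n\in\mathtt{L}$ and in $[0,\tfrac1{12}]$ otherwise. So I give the verifier's single qubit a rotation gate $R$ through the (in general uncomputable) angle $2\pi\gamma_\mathtt{L}$: applying $R$ exactly $4^n$ times to $\ket{q_0}$ leaves the qubit at angle $2\pi\,(4^n\gamma_\mathtt{L}\bmod 1)$ from $\ket{q_0}$, so a measurement in the $\{\ket{q_0},\ket{q_1}\}$ basis returns $q_1$ with probability at least $\tfrac34$ when $a^n\in\mathtt{L}$ and at most $\tfrac14$ when $a^n\notin\mathtt{L}$ --- a constant gap, which after a fixed number of repetitions with majority vote gives the required bounded error.

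The obstacle is that the verifier cannot count up to $4^n$, so it has the prover name the number. Since $4^n=2^{2n}$ has binary expansion $10^{2n}$, the honest prover sends $z=10^{2n}$, and in a single left-to-right reading of $z$ the verifier uses its two disjoint resources at once. With its finite control and input head it deterministically checks that $z$ is a $1$ followed by exactly $2n$ zeros, advancing the input head one cell for every two zeros read and requiring that the head sit on the right end-marker precisely when $z$ ends (a symbol other than $0$, too few zeros, or being pushed past the end-marker each trigger an immediate reject). Simultaneously, with the single qubit, it runs the amplitude-loading routine of \cite{Yak13C,YSD14}: reading the prover's binary string bit by bit, that routine, on a ``good'' event of probability $2^{-O(n)}$, leaves the qubit in the state reached from $\ket{q_0}$ by rotating through (encoded integer) $\times$ (base angle), up to a small controllable error; taking the base angle to be $2\pi\gamma_\mathtt{L}$, the good event leaves the qubit essentially in the state $R^{4^n}\ket{q_0}$. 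If the deterministic check fails, the verifier rejects; if the routine fails, it has the prover resend $z$ and retries; if the routine succeeds, it measures the qubit and accepts iff the outcome is $q_1$.

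Completeness: the honest prover always sends $z=10^{2n}$, so the deterministic check always passes, the good event occurs after $2^{O(n)}$ expected retries, and the measurement then accepts with probability at least $\tfrac34$. Soundness: any prover that ever sends a string other than $10^{2n}$ is rejected immediately, and a prover that always sends $10^{2n}$ thereafter has no influence at all on the routine --- its message is completely determined, and the good event depends only on the verifier's own coins --- so the measurement accepts with probability at most $\tfrac14$; in particular the prover gains nothing from the public visibility of the verifier's measurement outcomes, since in the soundness case its messages are forced. The expected running time is dominated by the $2^{O(n)}$ retries, each of $O(n)$ steps, giving $2^{O(n)}$ overall; the messages are classical and the verifier uses a single qubit, as required.

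The main difficulty is exactly what the loading routine of \cite{Yak13C,YSD14} is designed for: realizing the cumulative rotation $R^{4^n}$ inside one qubit of workspace when $4^n$ is exponential in $n$ and is named only implicitly, in binary, by an untrusted prover --- at the cost of the $2^{-O(n)}$ per-attempt success probability that sets the $2^{O(n)}$ runtime. The secondary point to handle is that an adaptive, public-coin-aware prover must not be able to pass the structural check with one string and then corrupt the loading with another; that is why the deterministic check on $z$ and the quantum loading from $z$ are run in lockstep over a single copy of $z$, so that whenever the loading is allowed to succeed, the integer it has loaded is provably $4^n$.
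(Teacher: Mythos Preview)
There is a genuine gap in your argument, located in the sentence where you invoke ``the amplitude-loading routine of \cite{Yak13C,YSD14}'' and assert that, on the good event, it leaves the qubit in the state $R^{4^n}\ket{q_0}$. The routines in those references (and the $ENCODE$ operators used later in this paper) load the integer named by a binary string into an \emph{amplitude}, via repeated ``double and add'' steps: the map $(\delta,1)\mapsto(2\delta{+}b,1)$ is linear on $\mathbb{C}^2$, so it can be realised by an operation element. What you need is different: reading a bit should take the qubit from angle $m\theta$ to angle $(2m{+}b)\theta$, i.e.\ from $(\cos m\theta,\sin m\theta)$ to $(\cos 2m\theta,\sin 2m\theta)$ and then possibly one more rotation. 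The angle-doubling map $(\cos\alpha,\sin\alpha)\mapsto(\cos 2\alpha,\sin 2\alpha)$ is degree~$2$ in the amplitudes and cannot be implemented by any fixed $2\times 2$ operation element, even with postselection. With a single qubit there is therefore no way to turn the $(2n{+}1)$-bit string $10^{2n}$ into the state $R^{4^n}\ket{q_0}$; your ``good event'' does not exist as described.

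This is not a minor technicality, because your protocol actually makes the prover superfluous: the message $10^{2n}$ is completely determined by $n$ and can be deterministically checked (indeed generated) by the verifier from the input $a^n$ alone. So what you are really claiming is a standalone one-qubit 2qcfa that decides every unary language in expected time $2^{O(n)}$, which would be strictly stronger than the theorem and is precisely what the missing angle-doubling step would be needed for. The paper avoids this obstruction by abandoning rotations in this proof: $\gamma_{\LL}$ is stored as an amplitude \emph{ratio} $(\gamma_{\LL},1)$, and the ``mod $1$'' that you get for free from rotation is replaced by having the prover supply, one per input symbol, the successive base-$4$ digits of $\gamma_{\LL}$; processing a claimed digit $b$ is the linear map $(\delta,1)\mapsto(4\delta{-}b,1)$, and an honest prover keeps $\delta\in[0,\tfrac13]$ while any lie drives $|\delta|$ outside $(-\tfrac23,1)$, which the final superoperator detects. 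The prover's message is thus genuinely informative, and soundness is about catching an incorrect digit rather than about forcing a unique transcript.
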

\begin{proof}
We describe the proof system in question. For any input of length $n\geq 0$, the prover is supposed to send a stream of bits describing the membership status of every string of length at most $n$, i.e. the sequence 
\[
	u_{\LL,n} =  G_\LL(\varepsilon)  G_\LL(a)  G_\LL(aa)  G_\LL(aaa)\cdots G_\LL(a^k)
\] 
to the verifier. This transmission is controlled by the outcomes of the coins of the verifier, which are of course visible to the prover. As will be seen below, some outcomes will be interpreted as commands for interrupting the transmission and restarting it from the first bit, whereas others will simply mean ``go on with the next bit''.

Figure \ref{fig:verifiercoins} depicts the  superoperators corresponding to the ``coins'' of the verifier. The outcome associated with each operation element appears as the subscript in its name. The reader may check that all four superoperators obey the wellformedness condition (\ref{eq:completeness}) stated in Figure \ref{fig:superoperators}. The initial state of the quantum part is the one corresponding to the first row and column in the operation elements, so the initial superposition of the qubit is just $\left( \begin{array}{c}
			1 \\ 0
\end{array}		 \right)$.

\begin{figure}[!ht]
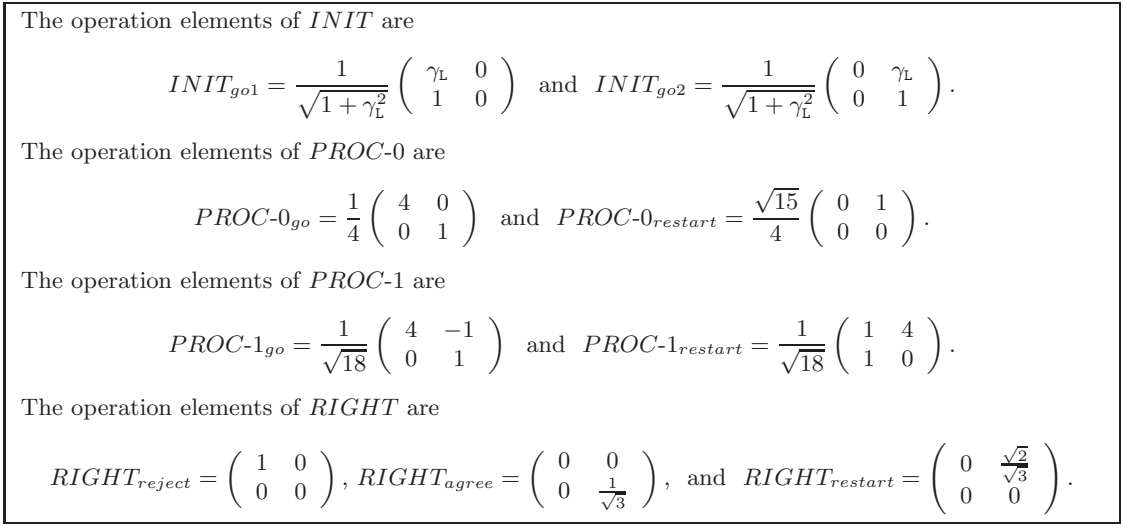

	\centering
	\footnotesize
	\fbox{
	\begin{minipage}{0.96\textwidth}
		The operation elements of 
			$ INIT $  are
			\[
				INIT_{go1} = \frac{1}{\sqrt{1+\gamma_{\LL}^2}} \left( 
					\begin{array}{cccr}
						\gamma_{\LL} & 0  \\ 
						1 & 0 
					\end{array} 
				\right)
				\mbox{~~and~~}
				INIT_{go2} =  \frac{1}{\sqrt{1+\gamma_{\LL}^2}} \left( 
					\begin{array}{cccr}
						0 & \gamma_{\LL}   \\ 
						0 & 1 
					\end{array} 
				\right).
			\]
The operation elements of 
			$ PROC\mbox{-}0 $  are
			\[
				PROC\mbox{-}0_{go} = \frac{1}{4} \left( 
					\begin{array}{cccr}
						4 & 0  \\ 
						0 & 1 
					\end{array} 
				\right)
				\mbox{~~and~~}
				PROC\mbox{-}0_{restart} = \frac{\sqrt{15}}{4} \left( 
				\begin{array}{cccr}
						0 & 1  \\ 
						0 & 0
				\end{array} 
				\right).
			\]
The operation elements of 
			$ PROC\mbox{-}1 $  are
			\[
				PROC\mbox{-}1_{go} = \frac{1}{\sqrt{18}} \left( 
					\begin{array}{cccr}
						4 & -1  \\ 
						0 & 1 
					\end{array} 
				\right)
				\mbox{~~and~~}
				PROC\mbox{-}1_{restart} = \frac{1}{\sqrt{18}} \left( 
				\begin{array}{cccr}
						1 & 4  \\ 
						1 & 0 
				\end{array} 
				\right).
			\]
The operation elements of 
			$ RIGHT $  are
			\[
				 RIGHT_{reject} = \left( 
					\begin{array}{cccr}
						1 & 0  \\ 
						0 & 0 
					\end{array} 
				\right)
				\mbox{, }
				 RIGHT_{agree} = \left( 
					\begin{array}{cccr}
						0 & 0  \\ 
						0 & \frac{1}{\sqrt{3}} 
					\end{array}
				\right)
				\mbox{,~~and~~}
				 RIGHT_{restart} = \left( 
					\begin{array}{cccr}
						0 & {\frac{\sqrt{2}}{\sqrt{3}}}   \\ 
						0 & 0
					\end{array}
				\right).
			\]
\end{minipage}
	}
	\caption{Nontrivial superoperators used by the verifier in the proof of Theorem \ref{theorem:unary}}
	\label{fig:verifiercoins}
\end{figure}

The first action taken by the verifier is an application of the superoperator $INIT$ to the qubit, setting it to the superposition 
	\begin{equation}
	\label{init}
		\frac{1}{\sqrt{1+\gamma_{\LL}^2}} \left( \begin{array}{c}
			\gamma_{\LL} \\ 1
\end{array}		 \right).
	\end{equation}
The outcome ``\textit{go}'' associated with this application cues the prover to send the first bit of  $ u_{\LL,n} $, i.e. the membership indicator of the empty string in $\LL$. The input head of the verifier is on the left end-marker at this point. 

For each bit $b\in \{0,1\}$ sent by the prover, the verifier applies the corresponding superoperator $PROC\mbox{-}b$ to the qubit. If the outcome turns out to be a  ``\textit{go}'', the verifier moves its input head one square to the right, and the same process is repeated unless the right end-marker is scanned. Any ``\textit{restart}'' outcome observed at any point causes the verifier to move the  head back to the left end-marker and restart the program.

Assume that the first $j$ symbols of the input have been processed in this manner without any restarts occurring. Write the qubit's superposition at that point in the form
	\begin{equation}\label{eq:vector}
	\alpha \left( \begin{array}{c}
			\delta \\ 1
\end{array}		 \right),
	\end{equation}
for two real numbers $\alpha$ and $\delta$, where  $\alpha=\frac{1}{\sqrt{1+\delta^2}}$. For $j=0$,  $\delta=\gamma_{\LL}$, as stated above.
If the prover now sends a 0 (respectively, a 1), and no restart occurs due to the application  of the corresponding superoperator, the next superposition would be
	\[
		\beta \left( \begin{array}{c}
			4\delta \\ 1
\end{array}		 \right)
				\mbox{(respectively, }
		\eta \left( \begin{array}{c}
			4\delta-1 \\ 1
\end{array}		 \right),)
\]
for some  reals $\beta$ and $\eta$.

Recalling that $\gamma_{\LL}\in [0,1/3]$, it is important to note that,
	as long as the prover sends membership bits consistent with the encoding of $\LL$ in $\gamma_{\LL}$, the value of $\delta$ in Expression (\ref{eq:vector}) will always be in the interval $[0,1/3]$ for any $j\leq n$.
	Otherwise, this value jumps out of the interval $(-2/3,1) $ at the point of the first disagreement between the prover transmission and $\gamma_{\LL}$, and never returns to this interval later in the execution for greater values of $j$, no matter what the prover says.
	
		If the head reaches the right end-marker, 
	the verifier applies the superoperator $RIGHT$, described in Figure \ref{fig:verifiercoins} when scanning the right end-marker. The three sides of this ``coin'' lead to rejection, or a decision to ``agree'' with the prover (i.e. give the same decision as the prover's claim about the string $a^k$ communicated in the last transmitted bit), or  a restart, as indicated in the figure.
	
As discussed in \cite{YS10B}, the overall acceptance probability of such a ``program with restart'' equals the ratio of the probability of acceptance in a single ``round'' without any restarts to the total probability of halting in such a round. This simplifies the analysis of the behavior of our verifier.

	Prior to the application of the superoperator $RIGHT$, the qubit will again be in a superposition of the form in Expression (\ref{eq:vector}), say,
\[
\alpha_{final} \left( \begin{array}{c}
			\delta_{final} \\ 1
\end{array}		 \right).
\]
 The automatic rejection probability due to the operation element $RIGHT_{reject}$ is then $\alpha_{final}^2\delta_{final}^2$, whereas the probability of adopting the prover's decision due to $RIGHT_{agree}$ is $\alpha_{final}^2/3$.	
	
	 If the input string is a member of $\LL$, then in any particular round, a prover obeying the protocol will cause the verifier to accept with probability $\alpha_{final}^2/3$, whereas the rejection probability  will be at most $\alpha_{final}^2/9$, since $\delta\leq 1/3$ in that case, as explained above. The overall acceptance probability is therefore at least 3/4. 
	
	If the input $a^n$ is not in $\LL$, then the verifier has to transmit an incorrect value for at least the bit about $a^n$ to avoid a rejection probability of 1. But in this case, the absolute value of the amplitude of the first quantum state will be at least $2\alpha_{final}/3$ as explained above, leading to a rejection probability of at least $4\alpha_{final}^2 /9$ in a single round. The overall  probability of rejection would then be at least 4/7. 
	
To provide a bound on the expected runtime of this algorithm, we examine the probability that the  procedure will halt in a single round. Note that 
both operation elements of the  $INIT$ operator leave the qubit in superposition in Expression [\ref{init}], so the analysis for the first round is the same as the ones which are caused by restarts. All the remaining operators will find the qubit in the form of Expression (\ref{eq:vector}). Scrutiny of Figure \ref{fig:verifiercoins} shows that an application of $PROC\mbox{-}0$ will yield a ``go'' with probability at least 1/16, and $PROC\mbox{-}1$ will yield a ``go'' with more than 0.05 probability at each application. The $RIGHT$ operator ends up with a decision to halt with at least 1/3 probability.\footnote{These bounds are calculated by considering the value of $\delta$ that maximizes the restarting probability in each case.} This leads to an upper bound of $O(20^n)$ for the expected number of rounds. Since each run takes linear time, we conclude that the verifier runs in time $2^{O(n)}$.
\end{proof}

\begin{theorem}\label{theorem:binary}
	For any language $ \mathtt{L} $ on the binary alphabet $B=\{0,1\} $, there exists a bounded-error public-coin interactive proof system where the messages are classical, the verifier is a 2qcfa with two quantum bits, and the expected runtime for inputs of length $n$ is $2^{2^{O(n)}}$.
\end{theorem}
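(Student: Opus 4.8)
The plan is to follow the skeleton of the proof of Theorem~\ref{theorem:unary}. The honest prover streams the base-$4$ digits of $\gamma_\LL$, i.e.\ the sequence $G_\LL(\Sigma^*(1)),G_\LL(\Sigma^*(2)),\ldots$; for each reported digit the verifier applies a $PROC$-type superoperator to its first qubit so as to keep the quantity $\delta$ of the proof of Theorem~\ref{theorem:unary} equal to the $\gamma_\LL$-tail $\sum_{i\ge 1}G_\LL(\Sigma^*(k+i))/4^i\in[0,1/3]$ after $k$ correctly reported digits; and when the digit about the input $w$ is reached it applies a $RIGHT$-type superoperator, rejecting with probability proportional to $\delta^2$ and otherwise adopting the prover's claim for that digit. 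As in the unary case the digit invariant is rigid: any insertion, deletion, or alteration in the stream forces $\delta$ out of $(-2/3,1)$ and it never returns, so the membership bit finally adopted is guaranteed to equal $G_\LL(w)$ \emph{provided} the $RIGHT$ operator is applied at exactly the right step. The genuinely new problem is locating that step: since $w=\Sigma^*(i_0)$ with $i_0$ equal to the integer whose binary representation is $1w$, we have $i_0$ of order $2^{n}$, so the verifier can no longer use input-head positions to count digits.

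To locate the $w$-digit I would have the prover tag the $i$-th digit with the string $\Sigma^*(i)$ itself (equivalently, with the binary counter $i$), and use the second qubit, together with the technique of \cite{Yak13C,YSD14} for encoding into a quantum amplitude an integer named by a string read from the prover, to certify that this tagging really is the lexicographic enumeration. The verifier checks deterministically that the first tag is $\varepsilon$, and uses qubit~$2$ to check that each tag $v'$ is the immediate lexicographic successor of the previous tag $v$ (arithmetically, that the integer named by $1v'$ is one more than the integer named by $1v$). By induction the $i$-th tag is then $\Sigma^*(i)$, so the verifier recognizes the answer round — and only then applies $RIGHT$ — by comparing the current tag against the input tape symbol by symbol, an \emph{exact} test that needs no counting. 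With the enumeration certified, the correctness analysis of Theorem~\ref{theorem:unary} carries over essentially verbatim: a prover trying to make the verifier accept some $w\notin\LL$ must either adopt the wrong value for the $w$-digit, which pushes $\delta$ out of range and triggers $RIGHT_{reject}$, or tamper with the enumeration so as to align the $w$-digit with a stray $1$ coming from a different string, which the qubit-$2$ check must rule out.

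For the running time, each round forces the prover to transmit $\Theta(i_0)=2^{\Theta(n)}$ tags, each of length at most $n$, together with $\Theta(i_0)$ digits, for a total of $2^{O(n)}$ symbols; every symbol is processed by a superoperator whose ``go'' outcome (and, for the amplitude-encoding steps, whose ``success'' outcome) has probability bounded below by a fixed constant strictly less than $1$, so a round completes without a restart with probability $2^{-2^{O(n)}}$. By the restart analysis of \cite{YS10B} — the overall acceptance probability being the ratio of the per-round acceptance probability to the per-round halting probability — the expected number of rounds is $2^{2^{O(n)}}$, and since each round runs in $2^{O(n)}$ steps the total expected running time is $2^{2^{O(n)}}$, as claimed.

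The step I expect to be the main obstacle is making the enumeration check sound \emph{with a constant error gap}, so that the resulting proof system has bounded error rather than an error probability that worsens with $n$. The naive implementations fail here: storing $i_0$ in an amplitude and counting down, or testing that the integer named by $1v'$ equals one plus the integer named by $1v$ via an Ambainis--Watrous-style ``return to the origin'' rotation \cite{AW02}, each detect a discrepancy of magnitude $d$ only with probability $\Omega(1/d^{2})$, and a cheating prover can drive $d$ up to $2^{O(n)}$ — enough to slide the $w$-digit onto a stray $1$ while evading detection with probability $1-2^{-O(n)}$, which no amount of restarting can fix. The remedy is to make every comparison bit-by-bit rather than value-by-value: present the successive tags so that a constant-space device can verify the successor relation exactly (including the length change when $v$ is all $1$s, which the input head can police), and carry the inter-round linkage of the tags through a $\gamma_\LL$-style base-$4$ digit invariant on qubit~$2$, where a single mismatched bit displaces the amplitude from its legal interval by a constant. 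Arranging these superoperators so that all remaining ``fuzzy'' distinctions — in particular the final reading of $\delta$ by $RIGHT$ — keep constant gaps, and scheduling them alongside the digit invariant on qubit~$1$ without the position-checking measurements disturbing it, is where the bulk of the technical work lies.
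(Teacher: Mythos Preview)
Your first three paragraphs are essentially the paper's proof: the honest prover transmits
\[
\#\,\varepsilon\,\ddagger\,G_\LL(\varepsilon)\,\#\,0\,\ddagger\,G_\LL(0)\,\#\,1\,\ddagger\,G_\LL(1)\,\cdots\,\#\,w\,\ddagger\,G_\LL(w);
\]
two of the four quantum amplitudes run the $\gamma_\LL$-digit invariant of Theorem~\ref{theorem:unary}; the other two amplitudes hold the integer $1s$ (current tag) and the integer $1s_{\mathrm{prev}}+1$ (successor of the previous tag), and are compared by a $SUCC$ operator at each $\ddagger$; and the verifier recognises the final block by matching the tag $s$ against $w$ symbol-by-symbol with its input head.

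Your fourth paragraph, however, misdiagnoses the obstacle, and the ``remedy'' you sketch is unnecessary. The $\Omega(1/d^{2})$ degradation you fear belongs to the rotation test of \cite{AW02}; the amplitude-encoding comparison of \cite{Yak13C,YSD14} behaves oppositely. In the paper's construction every ``go'' operation element carries a common scalar $c<1$, so after $j$ such elements the unnormalised register is $c^{j}(\beta,1,N,K)^{T}$. Applying $SUCC_{reject}$ (which subtracts the third and fourth amplitudes) then rejects with probability $c^{2j}(N-K)^{2}$, which is at least $c^{2j}$ whenever $N\neq K$; a larger discrepancy $d=|N-K|$ only \emph{raises} this probability. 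Any acceptance via $DECIDE_{agree}$ can only occur at some later step $k>j$ and has probability $c^{2k}/3$. Since $c<1$ and $j<k$, the per-round rejection probability strictly dominates the per-round acceptance probability regardless of $n$, and the restart ratio argument of \cite{YS10B} converts this into a constant overall error bound. In short, the soundness gap is not produced by making each successor test individually reliable, but by the monotone $c^{2j}$-versus-$c^{2k}$ ordering: the first lie the prover tells is always ``louder'' than anything it can say afterwards. Your paragraph-2 plan, once the operation elements are uniformly scaled by such a $c$, already achieves this; no bit-by-bit successor verification is needed.
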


\begin{proof}
We construct the proof system in question. For an input string $w$, the transmission expected from a truthful prover will be
\[
	b_{\LL,w} = \# \varepsilon \ddagger G_\LL(\varepsilon) \# 0 \ddagger G_\LL(0) \# 1 \ddagger G_\LL(1) \# 00 \ddagger G_\LL(00) \# 01 \ddagger G_\LL(01) \# 10 \ddagger G_\LL(10) \cdots \# w \ddagger G_\LL(w),
\]
containing the strings and their membership bits in lexicographic order up to $w$. (This transmission can be interrupted and restarted by verifier coin outcomes, as we saw in the previous proof.) Note that the first two symbols of $ b_L $ are $ \ddagger \ddagger $, since $\varepsilon$ has length 0. 
 
The verifier has four quantum states, to which we will refer with their places in the state vector. It sets its quantum register to the superposition \[
		\frac{1}{\sqrt{\gamma_{\LL}^2+3}}\left( \begin{array}{c}
			\gamma_{\LL} \\ 1 \\ 1 \\1
\end{array}		 \right)
	\]
	at the beginning of the computation, and after every restart, as explained in the proof of Theorem \ref{theorem:unary}. The prover then begins the transmission of what is supposed to be $b_{\LL,w}$, bit by bit. Figure \ref{fig:veralgo} describes the action of the verifier on 
 each block of the form $\# s \ddagger \sigma $ of the prover message, where $ s \in B^* $, and $\sigma \in B$.  Note that the procedure depicted in Figure \ref{fig:veralgo} can be interrupted by a ``restart'' outcome, as in the proof of Theorem \ref{theorem:unary}. 
 
\begin{figure}[!ht]
	\centering
	\footnotesize
	\fbox{
	\begin{minipage}{0.96\textwidth}
\begin{itemize}
	\item Perform the following two tasks in parallel while reading the prover's description of the  string $s$:
 		\begin{itemize}
			\item Scan the input string $w$ from left to right, comparing it to $s$. If $s$ is longer than $w$,  reject. 
			\item Encode the binary integer $ 1s $ into the amplitude of the third quantum state.
		\end{itemize}
	\item When the $\ddagger$ symbol, indicating the end of $s$, is received, 
compare the amplitude third quantum state with that of the fourth one,
rejecting with some probability only if they are different. 
\item Treat the membership bit $\sigma$ in the same manner as  in Figure \ref{fig:verifiercoins}, computing a new value for the amplitude of the first state. If $s=w$, the decision on the input 
will be given after the processing of $\sigma$ at the end of this block, in the same way as in Theorem \ref{theorem:unary}, by executing the superoperator $DECIDE$ described in Figure \ref{fig:encoding}.
\end{itemize}
\end{minipage}
	}
	\caption{Verifier procedure on a segment $\# s \ddagger \sigma $ of the prover message}
	\label{fig:veralgo} 
\end{figure}

This procedure  is implemented by the application of the superoperators described in Figure \ref{fig:encoding}. The operation elements corresponding to restarts, which have been omitted from the figure in the interest of brevity, and the common coefficient $c$, can be constructed easily to satisfy Equation \ref{eq:completeness}.  It can be seen that, if one starts in a superposition of the form 
	\[
		\alpha\left( \begin{array}{c}
			\beta \\ 1 \\ 1 \\ K
\end{array}		 \right),
	\]  
	then the application of a sequence of operation elements $ENCODE\mbox{-}b_{go}$ corresponding to the ordering of bits $b$ in $s$ leave the register in a superposition of the form
		\begin{equation}\label{nk}
				\iota\left( \begin{array}{c}
			\beta \\ 1 \\ N \\ K
\end{array}		 \right),
		\end{equation}
where $\alpha$, $\beta$, $\zeta$ and $K$ are some real numbers, and $N$ is the integer represented by the binary string $1s$.) 

The superoperator $SUCC$ (Figure \ref{fig:encoding}), applied when the input symbol $\ddagger$ is scanned, has the effect of rejecting the input with some probability that is nonzero only if $N\neq K$, restarting with some of the remaining probability, and going on with $N+1$ encoded into the fourth amplitude for being used later in the processing of the next segment otherwise. 

The superoperators   $PROC\mbox{-0}$ and $PROC\mbox{-1}$ for treating the membership bit are completely analogous to their namesakes in Figure \ref{fig:verifiercoins}.

The $DECIDE$ superoperator is applied only at the end of the final transmission block of each round, where the prover sends the actual input string and its purported membership bit.

\begin{figure}[!ht]
	\centering
	\footnotesize
	\fbox{
	\begin{minipage}{0.96\textwidth}
	\[
	\begin{array}{llllll}
			ENCODE\mbox{-}0_{go}
			&
			=
			&
			c \left( 
			\begin{array}{cccr}
				1 & 0 & 0 & 0 \\ 
				0 & 1 & 0 & 0 \\ 
				0 & 0 & 2 & 0 \\ 
				0 & 0 & 0 & 1
			\end{array} 
		\right)
		&
		ENCODE\mbox{-}1_{go} 
		&
		=
		&
		c \left( 
			\begin{array}{cccr}
				1 & 0 & 0 & 0 \\ 
				0 & 1 & 0 & 0 \\ 
				0 & 1 & 2 & 0 \\ 
				0 & 0 & 0 & 1
			\end{array} 
		\right)
		\\
		\\
		SUCC_{go}
		&
		=
		&
		c \left( 
			\begin{array}{cccr}
				1 & 0 & 0 & 0 \\ 
				0 & 1 & 0 & 0 \\ 
				0 & 1 & 0 & 0 \\ 
				0 & 1 & 1 & 0
			\end{array} 
		\right)
		&		
		SUCC_{reject}
		&
		=
		& 
		c \left( 
			\begin{array}{cccr}
				0 & 0 & 0 & 0 \\ 
				0 & 0 & 0 & 0 \\ 
				0 & 0 & 1 & -1 \\ 
				0 & 0 & 0 & 0
			\end{array} 
		\right)
	\\
	\\
	PROC\mbox{-}0_{go}
	&
	=
	&
		c \left( 
			\begin{array}{cccr}
				4 & 0 & 0 & 0 \\ 
				0 & 1 & 0 & 0 \\ 
				0 & 0 & 1 & 0 \\ 
				0 & 0 & 0 & 1
			\end{array} 
		\right)
		&
		PROC\mbox{-}1_{go} 
		&
		=
		&
		c \left( 
			\begin{array}{cccr}
				4 & -1 & 0 & 0 \\ 
				0 & 1 & 0 & 0 \\ 
				0 & 0 & 1 & 0 \\ 
				0 & 0 & 0 & 1
			\end{array} 
		\right)
	\\
	\\
	DECIDE_{reject}
	&
	=
	&
		c \left( 
			\begin{array}{cccr}
				1 & 0 & 0 & 0 \\ 
				0 & 0 & 0 & 0 \\ 
				0 & 0 & 0 & 0 \\ 
				0 & 0 & 0 & 0
			\end{array} 
		\right)
		&		
		DECIDE_{agree}
		&
		=
		&
		 c \left( 
			\begin{array}{cccr}
				0 & 0 & 0 & 0 \\ 
				0 & \frac{1}{\sqrt{3}} & 0 & 0 \\ 
				0 & 0 & 0 & 0 \\ 
				0 & 0 & 0 & 0
			\end{array} 
		\right)
	\end{array}
	\]
\end{minipage}
	}
	\caption{Nontrivial operation elements used by the verifier in the proof of Theorem \ref{theorem:binary}}
	\label{fig:encoding} 
\end{figure}

It can be seen that the basic idea is the same as in Theorem \ref{theorem:unary}, with the added complication that an evil prover can now trick the finite-state verifier about which string it is reporting a membership bit for. The solution is to have the prover spell out the strings, and the verifier to compare the encoding of each string with that of the previously sent one to try to catch any such tricks.

Let us analyze this algorithm by tracing the unconditional probabilities of acceptance and rejection in a single round: If the input string is in $\LL$,  the $SUCC$ operator will never cause a rejection. The probability that the input is rejected is at most $(c^j \frac{1}{3})^2$, and the acceptance probability is $(c^j \frac{1}{\sqrt{3}})^2$, for an integer $j$ dependent on the input, leading to overall acceptance with probability $\frac{3}{4}$, as in the proof of Theorem \ref{theorem:unary}.

What if the input $w$ is not in $\LL$? If the prover sends the strings up to $w$ in correct lexicographic order and attempts to sneak in an incorrect membership bit, the verifier rejects with an overall probability of $\frac{4}{7}$, as analyzed in the proof of Theorem \ref{theorem:unary}. If a string is indeed presented out of order, the $SUCC$ operator will catch this and reject with probability  $p=(c^j (N-K))^2$ (cf. Expression (\ref{nk})) for some $j>0$. Any acceptance decision that may occur later as a result of the $DECIDE_{agree}$ operation element would necessarily have probability $(c^k \frac{1}{\sqrt{3}})^2$ for some $k>j$, and it is easy to see that this is small in comparison with $p$, leading to the conclusion that $w$ will be rejected with high overall probability.

Since $b_{\LL,w} $ has length $ 2^{O(n)} $, the halting probability in each round is $c^{-2^{O(n)}}$, yielding an expected running time double exponential in $ n $.
\end{proof}

For any decidable language $ \LL $, $ \gamma_\LL $ is a computable number. Proof systems like the ones described in Theorems \ref{theorem:unary} and \ref{theorem:binary}, but with only computable amplitudes, therefore exist for all decidable languages. 

\section{Open questions}
The capabilities of bounded error PTM's utilizing $
o(\log n)$ space and possibly uncomputable transition probabilities, both as recognizers and verifiers, are unknown. Can they handle uncountably many languages, like their quantum counterparts?

Debate systems \cite{YSD14} are generalizations of proof systems, where an additional agent, the refuter, is trying to convince the verifier that the prover is wrong. Are the classical or quantum versions of these systems able to make better use of uncomputable transitions than the single-prover systems?

Wang \cite{Wan92} showed that the  class  of  all  languages  recognized by 2pfa's using rational transition probabilities is  strictly  contained  in  the  class  of  all  deterministic  context-sensitive  languages. Can 2pfa's with uncomputable transitions recognize any undecidable language with bounded-error? Does the answer change if we allow the machine to use sublogarithmic space (or a counter, or pebbles, or multiple heads, etc.)?

What can be said in this regard about 2pfa's with uncomputable transitions that are verifiers in private-coin proof systems?
\section*{Acknowledgement}
We thank Peter Shor for his helpful answers to our questions.

\bibliographystyle{plain}
\bibliography{tcs}

\appendix

\section{Some corollaries of Theorem \ref{theorem:powereq}}

A language is said to be \textit{stochastic} if there exists a one-way probabilistic finite automaton (1pfa) that accepts any member with probability greater than $ \frac{1}{2} $, and rejects any non-member with probability at most $ \frac{1}{2} $. In his seminal paper, Rabin proved that the cardinality of the class of stochastic languages is uncountable \cite{Rab63}.\footnote{Rabin's proof uses a binary alphabet. Recently, it was shown that \cite{SY14A} the cardinality of stochastic languages on a unary alphabet is also uncountable.} The constructions in Section \ref{sec:rec} allow us to give an alternative proof of this result.

\begin{theorem}
	There exist uncountably many stochastic languages.
\end{theorem}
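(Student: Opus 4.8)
The plan is to bypass Rabin's original argument and instead exhibit, very concretely, an uncountable family of pairwise distinct stochastic languages, using a classical counterpart of the real-encoding idea of Section~\ref{sec:rec}: we hide a single real threshold inside the transition probabilities of a one-way probabilistic finite automaton and recover it at the fixed cut point $\tfrac{1}{2}$ demanded by the definition above. For $w=w_1w_2\cdots w_n\in\{0,1\}^*$ let $v(w)\in[0,1)$ denote the value of the binary fraction $0.w_nw_{n-1}\cdots w_1$ obtained by reversing $w$ (so $v(\varepsilon)=0$), and for each irrational $\beta\in(0,1)$ put
\[
	L_\beta=\{\, w\in\{0,1\}^* \mid v(w)>\beta \,\}.
\]

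First I would recall the standard two-state 1pfa $G$ (essentially Rabin's): it has initial state $q_0$ and single accepting state $q_1$, and its transitions are arranged so that the probability $p_i$ of being in $q_1$ after reading the prefix $w_1\cdots w_i$ obeys $p_0=0$ and $p_{i+1}=\tfrac{1}{2}(p_i+w_{i+1})$ — reading a $1$ carries $q_0$ to $q_1$ with probability $\tfrac{1}{2}$ and fixes $q_1$, while reading a $0$ carries $q_1$ to $q_0$ with probability $\tfrac{1}{2}$ and fixes $q_0$. Unwinding the recursion shows that $G$ accepts $w$ with probability exactly $v(w)$, and that $\{\, v(w)\mid w\in\{0,1\}^* \,\}$ is precisely the set of dyadic rationals in $[0,1)$.

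Next I would build $M_\beta$ by giving it the randomized initial configuration: with probability $\tfrac{1}{2}$ it starts in the initial state of $G$; with probability $\tfrac{1}{2}(1-\beta)$ it starts in a permanently accepting state; and with probability $\tfrac{1}{2}\beta$ it starts in a permanently rejecting state (if one insists on a deterministic start state, make these choices on the first step). On any input $w$ — including $\varepsilon$ — the acceptance probability of $M_\beta$ is then $\tfrac{1}{2}v(w)+\tfrac{1}{2}(1-\beta)$, which exceeds $\tfrac{1}{2}$ exactly when $v(w)>\beta$. Since $v(w)$ is always dyadic while $\beta$ is irrational, this quantity never equals $\tfrac{1}{2}$; hence $M_\beta$ accepts every member of $L_\beta$ with probability strictly greater than $\tfrac{1}{2}$ and every non-member with probability strictly less than $\tfrac{1}{2}$, so $L_\beta$ is stochastic. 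The only transition probabilities used, besides $\tfrac{1}{2}$, are $\beta$ and $1-\beta$ — uncomputable for all but countably many $\beta$, which is exactly the ``magic coin'' at work. For distinctness: if $\beta<\beta'$ are both irrational in $(0,1)$, choose a dyadic rational $d$ with $\beta<d<\beta'$ and a string $w$ with $v(w)=d$; then $w\in L_\beta\setminus L_{\beta'}$, so $L_\beta\neq L_{\beta'}$. As $(0,1)$ contains uncountably many irrationals, we obtain uncountably many distinct stochastic languages, recovering Rabin's result without ever varying the cut point.

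I do not expect a genuine obstacle here; the construction is short, and the phenomenon it illustrates — a constant-space classical machine can still harvest uncountably much from an uncomputable coin, as long as one only asks for many languages rather than for a prescribed one — is precisely the contrast that motivates Section~\ref{sec:rec}. The single point that needs care is the strict/non-strict inequality bookkeeping in the definition of stochastic recognition, and this is exactly why $\beta$ must be taken irrational: it guarantees that the affine acceptance probability $\tfrac{1}{2}v(w)+\tfrac{1}{2}(1-\beta)$ never lands on the cut point, so that members and non-members separate cleanly at $\tfrac{1}{2}$.
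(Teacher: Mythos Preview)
Your argument is correct, but it is essentially Rabin's original proof rather than a bypass of it: Rabin's two-state automaton is exactly your $G$, and absorbing an irrational cut point $\beta$ into the machine via a biased initial lottery (so that the fixed cut point $\tfrac{1}{2}$ can be used) is a standard normalization. So the proof stands, but the framing is a little off.

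The paper takes a genuinely different route. Its goal in this appendix is not to give the shortest proof of the theorem but to exhibit it as a \emph{corollary of Theorem~\ref{theorem:powereq}}. Concretely, it shows that for every language $\mathtt{L}$ the specific language $\powereqL$ is stochastic, by building a one-way qcfa (which, by \cite{YS11A}, defines exactly the stochastic languages at cut point $\tfrac{1}{2}$) that runs several procedures in parallel on four qubits: two qubits implement the $\sqrt{2}\pi$-rotation comparisons from Figure~\ref{fig:powereqalg} to penalize inputs outside $\powereq$ by a polynomially small margin, while a third qubit gates, with exponentially small probability, the $\theta_{\mathtt{L}}$-rotation procedure of Figure~\ref{fig:powereqLalg} on a fourth qubit. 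The balance of these margins pushes the overall acceptance probability to the correct side of $\tfrac{1}{2}$.

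What each approach buys: yours is short, purely classical, and self-contained, but it produces an ad hoc family $L_\beta$ with no particular computational content. The paper's construction is heavier, but it shows that the \emph{same} languages $\powereqL$ that witness Theorem~\ref{theorem:powereq} --- languages of every Turing degree, recognized with bounded error by 2qcfa's --- are already stochastic, tying the appendix back to the paper's main machinery and to the one-way quantum model.
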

\begin{proof}	
	We will prove that $ \powereqL $ is a stochastic language for any given language $ \mathtt{L} $. This will complete the proof, since there are uncountably many $ \mathtt{L} $, and a different $ \powereqL $  for any  $ \mathtt{L} $.
	
It is known that the class of languages recognized by the one-way versions of 2qcfa's (namely, the 1qcfa's \cite{ZQLG12}) or any other one-way qfa variant with cut-point $\frac{1}{2}$, i.e. as described in the definition of stochastic languages above,  is again the class  of stochastic languages \cite{YS11A}. It will therefore be sufficient to show how to construct a 1qcfa algorithm, say $M$, that accepts all and only the  members of $\powereqL$ are accepted with probability bigger than $ \frac{1}{2} $, for any given $\LL$. 

Being a  one-way machine, $M$ reads the input from left to the right symbol by symbol in a single pass.
	
	While reading the input, $M$ executes some classical and quantum procedures in parallel. There are two classical procedures:
	\begin{itemize}
		\item If the input is $ aba^7 $, accept. 
		\item Check the input to see whether it is of the form
			\begin{equation}
				\label{eq:well-formed}
				a b a^7 b a^{7 \cdot t_1} b a^{7 \cdot t_2}  \cdots b a^{7 \cdot t_n} 
			\end{equation}
			for some $n> 0 $, where all the $t_i$ are positive multiples of 8. If not, reject.
	\end{itemize}
	
	We now describe the quantum procedures, for which $M$ uses four qubits:
	\begin{enumerate}
		\item Use the first qubit to compare $ t_1 $ with $ \frac{t_2}{8} $, $ t_3 $ with $ \frac{t_4}{8} $, and so on. To compare $ t_{2j-1} $ with $ \frac{t_{2j}}{8} $, we set the qubit to $ \ket{q_0} $, and then rotate it with angle $ \sqrt{2}\pi $ $ t_{2j-1} $  times,  and with angle $ -\sqrt{2}\pi $  $ \frac{t_{2j}}{8} $ times. Then measure the qubit, rejecting if $ q_1 $ is observed. Otherwise, continue with the next pair.
		\item Use the second qubit to apply the same pairwise comparison  to $ t_2 $ and  $ \frac{t_3}{8} $, $ t_4 $ and $ \frac{t_5}{8} $, and so on.
		\item Use the third qubit to flip a fair coin for each $ a $ that is scanned. If any flip results in a ``tail'',  do not take a decision in this procedure.  
Use the fourth qubit to implement the procedure given in Figure \ref{fig:powereqLalg} as long as you keep getting all ``head''s on the third qubit: That is, set the fourth qubit  to $ \ket{q_0} $ at the beginning, and then rotate it with angle $ \theta_L $ (described in Section \ref{sec:rec}) for each $a$ you see. At the end, rotate the qubit with angle $  \frac{\pi}{4}$ and then make a measurement. If the outcome is $ q_1 $, accept. Otherwise,  reject. 
	\end{enumerate}
	If any one of these procedures reaches the right end-marker without a decision, it flips a fair coin to accept and reject with equal probability. 
	
	If the input is not of the form (\ref{eq:well-formed}), then it is accepted with probability 0.	We assume that the input is of the form (\ref{eq:well-formed}) for the rest of the discussion. The analysis of the first and second quantum procedures was given in the proof of Theorem \ref{theorem:powereq}. Therefore, any input that is not in $ \powereq $ is rejected with a polynomially small probability by those procedures. Note that the third quantum procedure will take a decision with exponentially small probability. A majority of the computational paths of the machine will end up contributing  equal amounts to the overall acceptance and rejection probabilities.

	Therefore, for any input not in $ \powereq $, the rejection caused by the first and/or second quantum procedures will overwhelm any acceptance due to the third one, with the input being accepted with total probability less than $ \frac{1}{2} $. If the input is in $ \powereq $, then whether the overall acceptance probability is greater or less than $ \frac{1}{2} $ depends on the tiny but mostly correct contribution from the third procedure. Thus,  $ \powereqL $ is a stochastic language.
\end{proof}



It is still open whether 2qcfa's can recognize any unary non-regular language with bounded error.

We define two more languages: $ \mathtt{UPOWER} = \{a^{8^n} \mid n \geq 0 \} $ and $ \mathtt{UPOWER(L)} = \{a^{8^n} \mid n \in \mathtt{L}\} $. We know that $ \mathtt{UPOWER} $ can be recognized by a two-way deterministic one-counter automaton (2dca) in $ O(|w|\log |w|) $ time, where $ w $ is the given input. A 2qcfa augmented with a classical counter is abbreviated as 2qcca.

\begin{theorem}
	For any given $ \mathtt{L} $, $ \mathtt{UPOWER(L)} $ can be recognized by a 2qcca with bounded-error in $ O(|w|\log |w|) $ time.
\end{theorem}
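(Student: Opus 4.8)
The plan is to fuse two ingredients that the paper has already built: the deterministic counter machine that recognizes $\upower$ in $O(|w|\log|w|)$ steps, and the quantum amplitude-encoding trick of Figure \ref{fig:powereqLalg}. A \qcca\ has a two-way input head, a constant-size quantum register, and one classical counter, so it can run these two components one after the other. I would define $\theta_{\mathtt{L}}$ exactly as in Section \ref{sec:rec}, except that now $F_{\mathtt{L}}(n)$ encodes whether the \emph{integer} $n$ belongs to the given set $\mathtt{L}\subseteq\mathbb{N}$ (rather than whether the $n$-th string lexicographically belongs to a language).

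In the first phase the machine runs the known $\dca$ algorithm, using its classical counter, to check deterministically that the input is of the form $a^{8^{n}}$ for some $n\geq 0$; if this test fails the machine halts and rejects. Since this phase is deterministic it contributes no error. This is the step for which the counter is genuinely needed: unlike $\powereq$, the unary language $\upower$ carries no separator symbols, so a finite-state quantum device alone cannot isolate the exponent, whereas the counter can. The phase costs $O(|w|\log|w|)$ time.

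If the first phase succeeds, the machine moves its head back to the left end-marker and executes the procedure of Figure \ref{fig:powereqLalg}, ignoring the counter from now on: it sets the qubit to $\ket{q_0}$, rotates it by $\theta_{\mathtt{L}}$ for every $a$ scanned in one left-to-right pass, then rotates by $\frac{\pi}{4}$ and measures, accepting iff the outcome is $q_1$. Because the input contains exactly $8^{n}$ copies of $a$, the qubit ends rotated by $8^{n}\theta_{\mathtt{L}} = \pi F_{\mathtt{L}}(n)/4 + (\text{small remainder}) \pmod{2\pi}$, so, exactly as in \cite{ADH97} and in the analysis of Theorem \ref{theorem:powereq}, the final measurement reveals $F_{\mathtt{L}}(n)$ with error at most $0.02$; standard amplification then pushes the error below any desired constant. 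This phase takes $O(|w|)$ time, for an overall bound of $O(|w|\log|w|)$, as claimed.

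I do not expect a serious obstacle here, since the construction is a clean combination of machinery already in the paper. The only points requiring (routine) care are: re-indexing $\theta_{\mathtt{L}}$ so it refers to the numerical value $n$ rather than a lexicographic position; arranging that control passes from the deterministic $\upower$-check to the quantum phase instead of halting, and that the quantum phase's transitions are defined to be independent of the counter's contents; and confirming that the quantitative small-remainder error estimate of \cite{ADH97} transfers verbatim, which it does because the geometry of the rotations is identical to the one analyzed in Theorem \ref{theorem:powereq}.
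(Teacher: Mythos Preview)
Your proposal is correct and follows essentially the same two-phase approach as the paper: a deterministic counter-based check that $|w|_a$ is a power of $8$ (costing $O(|w|\log|w|)$), followed by the single-qubit rotation procedure of Figure~\ref{fig:powereqLalg} (costing $O(|w|)$). The paper even spells out the zig-zag counter routine for the first phase explicitly, but otherwise the structure, error analysis, and time bound match what you wrote.
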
 
\begin{proof}
	Let $ w=a^m $ be the input, where $ m \geq 0 $.	The 2qcca algorithm consists of two phases. It first classically determines whether $ w \in \mathtt{UPOWER}$, by checking if $m$ is a power of 8. If so, it executes the quantum phase, which checks for membership in  $ \mathtt{UPOWER(L)} $ by taking the membership function of $\LL$ into account.

	 In its first  pass over the input $w$, the classical phase rejects if $ m=0 $, or if $m$ is not a multiple of 8, and sets the value of the counter to $  \frac{m}{8} $ otherwise.
	 
	  At this point with the head on the right end-marker, the machine controls the value in the counter, and switches to the quantum phase (described below) if it equals 1. Otherwise, the head is moved exactly $ \frac{m}{8} $ symbols from right to left, setting the counter to 0 at the end of this walk. The head then scans the same $ \frac{m}{8} $-symbol postfix of the input from left to right,  incrementing the counter once every 8 steps,   setting its value to $ \left\lfloor \frac{m}{8^2} \right\rfloor  $ when the right end-marker is reached, and  checking whether the length of this postfix is itself is a multiple 8. If not, the input is rejected, otherwise, the procedure described in this paragraph is repeated. It is clear that the number of zigzags on the input is bounded by $ O(\log |w|) $, and so this procedure takes at most $ O(|w|\log |w|) $ steps.
	
	In the quantum phase, the head is placed on the left end-marker,  and the procedure of Figure \ref{fig:powereqLalg} is implemented on a single qubit. It has already been shown that this procedure accepts with high probability if the number of $a$'s that it scans corresponds to a string in $\LL$, and rejects with high probability otherwise. This procedure takes $ m $ steps.
\end{proof}

Moreover, $\powereq$ can be recognized by 2qcca's in linear time exactly (deterministically). The numbers $i$ and $j$ in each  block  of the form $ a^iba^j $ can be compared while going from left to right, and the overall running time is linear.
\begin{corollary}
	For any given $ \mathtt{L} $, $ \powereqL  $ can be recognized by 2qcca's with bounded error in linear time.
\end{corollary}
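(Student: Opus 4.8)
The plan is to build the recognizer as a two-phase machine: a deterministic first phase that uses the counter (and no qubit) to decide membership in $\powereq$ exactly, followed by the quantum phase of Figure~\ref{fig:powereqLalg} (one qubit, no counter) that decides membership in $\powereqL$ under the promise that the input already lies in $\powereq$. Since the counter is touched only in phase one and the qubit only in phase two, the resulting device is a 2qcca, and its error will be entirely inherited from phase two.

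For the first phase I would proceed as in the remark preceding the statement. The finite control reads the literal prefix $a\,b\,a^7\,b$ (accepting outright if the input is exactly $a b a^7$), after which the tape carries a sequence of $a$-blocks separated by $b$'s whose lengths must be $7\cdot 8,\,7\cdot 8^2,\dots,7\cdot 8^n$, i.e. each block eight times as long as its predecessor (the block $a^7$ of the prefix serving as the first predecessor). To check one block against the previous one, the machine keeps the previous block's length $p$ in the counter, sweeps across the current block decrementing the counter once per eight input symbols, keeping the residue modulo $8$ in the control, and rejects unless the counter reaches $0$ exactly when the block ends; then, to restore the invariant, it makes two further sweeps over the current block (one leftward, incrementing, then one rightward, unchanged) so that the counter holds the current block's length before the next comparison. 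After the last block it checks that the right end-marker follows. Because the block lengths grow geometrically, $\sum_k \ell_k = O(|w|)$, so the constantly many sweeps per block sum to $O(|w|)$ total: this is deterministic, exact, linear-time recognition of $\powereq$.

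If the input survives phase one, then $|w|_a = 8^{j}$ for some $j$, and $w\in\powereqL$ precisely when $\Sigma^*(j)\in\mathtt{L}$, i.e. $F_{\mathtt{L}}(j)=1$. I would then run verbatim the procedure of Figure~\ref{fig:powereqLalg}: move the head to the left end-marker, rotate the single qubit by $\theta_{\mathtt{L}}$ for every scanned $a$ on one left-to-right pass, apply a final rotation by $\frac{\pi}{4}$, measure, and accept iff the outcome is $q_1$. By the analysis in the proof of Theorem~\ref{theorem:powereq}, the accumulated angle lands within a sufficiently small $\delta$ of $\frac{\pi}{2}$ or of $0$ according to the value of $F_{\mathtt{L}}(j)$, so this phase is correct with error at most $0.02$, and it runs in $O(|w|)$ steps with certainty (a single pass followed by one measurement). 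Composing the two phases yields a 2qcca that rejects every $w\notin\powereq$ with probability $1$ and distinguishes $w\in\powereqL$ from $w\in\powereq\setminus\powereqL$ with probability at least $0.98$, hence bounded error, in time $O(|w|)+O(|w|)=O(|w|)$; the error may be pushed down by repeating phase two a constant number of times and taking a majority vote, still within linear time.

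The only delicate point is the linear — rather than $O(|w|\log|w|)$ — running time of the first phase, which is the respect in which this differs from the $\upowerL$ construction above: here $\powereq$ pads the input by writing every block explicitly on the tape with consecutive blocks in the fixed ratio $8$, so each length comparison is between two physically adjacent blocks, and the cost of processing block $k$ is $O(\ell_k)$, which telescopes geometrically to $O(|w|)$; with the unary padding $a^{8^n}$ of $\upowerL$ one is forced into $\Theta(\log|w|)$ zigzags of length up to $|w|$. Everything else is a direct reuse of the machinery of Section~\ref{sec:rec}.
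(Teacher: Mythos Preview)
Your proposal is correct and follows essentially the same approach as the paper: the paper's justification is the one-line remark preceding the corollary (that $\powereq$ is recognized deterministically in linear time by a 2qcca via block-to-block comparisons with the counter), followed by the implicit invocation of the linear-time quantum procedure of Figure~\ref{fig:powereqLalg}. You have simply fleshed out the details of the counter phase (the restore-by-two-sweeps trick and the geometric telescoping of the per-block cost) that the paper leaves as a sketch.
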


We can obtain similar results for other computational models. For example, a one-way two-head deterministic finite automaton can easily recognize $\powereq$ in linear time. Therefore, $ \powereqL  $, for any given $ \mathtt{L} $, can be recognized by a 1qcfa with two input heads in linear time.

Moreover, $ \powereq $ can also be recognized by a 1.5-way\footnote{A 1.5qfa makes a single left-to-right pass on the input. The head is quantum and is allowed to pause on the same square for some steps, so it is possible to have superpositions of several different head positions.} quantum finite automaton (1.5qfa) \cite{AI99} with bounded error in linear time, by a straightforward modification of techniques given in \cite{KW97,AI99,YS09B,Yak12B}. The procedure given in Figure \ref{fig:powereqLalg} can be implemented by such a 1.5qfa in parallel. Therefore, $ \powereqL $, for any given $ \mathtt{L} $, can also be recognized by 1.5qfa's with bounded error.

\end{document}